\documentclass[reqno,10pt,a4paper,dvips]{amsart}

\usepackage{amssymb,mathptmx,cite,psfrag,eucal,array,setspace,geometry,enumitem}
\usepackage[dvips]{graphicx}
\usepackage{tikz}

\geometry{inner=30mm, outer=30mm, top=30mm, bottom=30mm, head=10mm, foot=10mm}

\numberwithin{equation}{section}

\newcolumntype{C}{>{$}c<{$}} 

\allowdisplaybreaks




\newcommand{\alg}[1]{\mathfrak{#1}}
\newcommand{\group}[1]{\mathsf{#1}}
\newcommand{\uealg}[1]{\mathcal{U} \bigl( #1 \bigr)}

\newcommand{\func}[2]{#1 \left( #2 \right)}
\newcommand{\tfunc}[2]{#1 \bigl( #2 \bigr)}

\newcommand{\brac}[1]{\left( #1 \right)}
\newcommand{\sqbrac}[1]{\left[ #1 \right]}
\newcommand{\set}[1]{\left\{ #1 \right\}}

\newcommand{\abs}[1]{\left| #1 \right|}

\newcommand{\ZZ}{\mathbb{Z}}
\newcommand{\NN}{\mathbb{N}}
\newcommand{\RR}{\mathbb{R}}
\newcommand{\CC}{\mathbb{C}}

\newcommand{\ee}{\mathsf{e}}

\newcommand{\eps}{\varepsilon}

\newcommand{\killing}[2]{\kappa \bigl( #1 , #2 \bigr)}

\newcommand{\affine}[1]{\widehat{#1}}

\newcommand{\comm}[2]{\bigl[ #1 , #2 \bigr]}
\newcommand{\acomm}[2]{\bigl\{ #1 , #2 \bigr\}}

\newcommand{\ket}[1]{\bigl\lvert #1 \bigr\rangle}


\newcommand{\normord}[1]{\mbox{${} : #1 : {}$}} 

\newcommand{\VerMod}[1]{\mathcal{V}_{#1}}

\newcommand{\ProjMod}[1]{\mathcal{P}_{#1}}
\newcommand{\TypMod}[1]{\mathcal{T}_{#1}}
\newcommand{\AtypMod}[1]{\mathcal{A}_{#1}}

\newcommand{\AffVerMod}[1]{\affine{\mathcal{V}}_{#1}}

\newcommand{\AffProjMod}[1]{\affine{\mathcal{P}}_{#1}}
\newcommand{\AffTypMod}[1]{\affine{\mathcal{T}}_{#1}}
\newcommand{\AffAtypMod}[1]{\affine{\mathcal{A}}_{#1}}

\newcommand{\SLA}[2]{\alg{#1} \left( #2 \right)}
\newcommand{\SLSA}[3]{\alg{#1} \left( #2 \middle\vert #3 \right)}
\newcommand{\AKMA}[2]{\affine{\alg{#1}} \left( #2 \right)}
\newcommand{\AKMSA}[3]{\affine{\alg{#1}} \left( #2 \middle\vert #3 \right)}

\newcommand{\SLSG}[3]{\group{#1} \left( #2 \middle\vert #3 \right)}

\newcommand{\ch}[2]{\tfunc{\chi_{\raisebox{-3pt}{$\scriptstyle #1$}}}{#2}}

\newcommand{\fuse}{\mathbin{\times}}

\newcommand{\Vertop}[1]{\ee^{#1}}
\newcommand{\vertop}[1]{\normord{\Vertop{#1}}}

\newcommand{\dses}[3]{0 \longrightarrow #1 \longrightarrow #2 \longrightarrow #3 \longrightarrow 0}

\newcommand{\eqnref}[1]{Equation~\eqref{#1}}

\newcommand{\eqnTref}[3]{Equations~\eqref{#1}, \eqref{#2} and \eqref{#3}}

\newcommand{\secref}[1]{Section~\ref{#1}}

\newcommand{\propref}[1]{Proposition~\ref{#1}}

\newcommand{\cfts}{conformal field theories}

\newcommand{\lcft}{logarithmic conformal field theory}

\newcommand{\WZW}{Wess-Zumino-Witten}
\newcommand{\ope}{operator product expansion}
\newcommand{\opes}{operator product expansions}
\newcommand{\hws}{highest weight state}
\newcommand{\hwss}{highest weight states}

\newcommand{\hwm}{highest weight module}
\newcommand{\hwms}{highest weight modules}

\newcommand{\bb}{\beta}
\newcommand{\bg}{\gamma}
\newcommand{\bc}{\varkappa}

\newcommand{\BH}{\mathbf{H}}
\newcommand{\BZ}{\mathbf{Z}}
\newcommand{\BE}{\mathbf{E}}
\newcommand{\BF}{\mathbf{F}}
\newcommand{\Be}{\mathbf{e}}
\newcommand{\Bf}{\mathbf{f}}
\newcommand{\sg}{\mathsf{g}}
\newcommand{\sj}{\mathsf{j}}
\newcommand{\st}{\mathsf{t}}

\newcommand{\CE}{\mathcal{E}}
\newcommand{\CH}{\mathcal{H}}
\newcommand{\CT}{\mathcal{T}}
\newcommand{\CW}{\mathcal{W}}

\DeclareMathOperator{\tr}{tr}

\DeclareMathOperator{\diag}{diag}

\newcommand{\traceover}[1]{\tr_{\raisebox{-3pt}{$\scriptstyle #1$}}}

\theoremstyle{plain}
\newtheorem{theorem}{Theorem}

\newtheorem{proposition}[theorem]{Proposition}

\newtheorem*{conjecture}{Conjecture}




\begin{document}

\title{W-algebras extending $\AKMSA{gl}{1}{1}$}

\author[T Creutzig]{Thomas Creutzig}

\address[T Creutzig]{
Fachbereich Mathematik \\
Technische Universit\"{a}t Darmstadt\\
Schlo\ss{}gartenstra\ss{}e 7\\
64289 Darmstadt\\ Germany
}

\email{tcreutzig@mathematik.tu-darmstadt.de}

\author[D Ridout]{David Ridout}

\address[David Ridout]{
Department of Theoretical Physics \\
Research School of Physics and Engineering;
and
Mathematical Sciences Institute;
Australian National University \\
Canberra, ACT 0200 \\
Australia
}

\email{david.ridout@anu.edu.au}

\thanks{\today}

\begin{abstract}
It was recently shown that $\AKMSA{gl}{1}{1}$ admits an infinite family of simple current extensions.  Here, these findings are reviewed and explicit free field realisations of the extended algebras are constructed.  The leading contributions to the operator product algebra are then calculated.  Among these extensions, one finds four infinite families that seem to contain, as subalgebras, copies of the $W^{\brac{2}}_N$ algebras of Feigin and Semikhatov at various levels and central charges $\pm 1$.
\end{abstract}

\maketitle

\onehalfspacing

\section{Introduction} \label{secIntro}

The affine Kac-Moody superalgebra $\AKMSA{gl}{1}{1}$ is an attractive candidate for study.  On the one hand, its highest weight theory is particularly easy to analyse.  On the other, one is naturally led to study indecomposable modules of the type that arise in \lcft{}.  In \cite{CR:GL11}, we reviewed and consolidated what was known about this superalgebra, drawing in particular upon the previous works \cite{RozQua92,Rozansky:1992td,SalGL106,Creutzig:2007jy,CS09,Creutzig:2009zz,Creutzig:2010ne}.

One motivation for undertaking this work was to understand how one could reconcile the observation that \cfts{} with $\AKMSA{gl}{1}{1}$ symmetry appeared to admit only continuous spectra, whereas one might expect that the \WZW{} model on the real form $\SLSG{U}{1}{1}$ would have the same symmetry, but a discrete spectrum.  Another was to understand whether $\AKMSA{gl}{1}{1}$ could be related to other infinite-dimensional algebras, thus providing relationships between certain (logarithmic) \cfts{}.  For the first question, we were able to show that certain discrete spectra seem to be consistent provided one \emph{extends} the chiral algebra appropriately.  For the second, we identified a certain $\AKMA{u}{1}$-coset of $\AKMSA{gl}{1}{1}$ as the chiral algebra of the well-known $\beta \gamma$ ghost system.  Previous work \cite{RidSL210} then links $\AKMSA{gl}{1}{1}$ to the affine Kac-Moody algebra $\AKMA{sl}{2}_{-1/2}$ \cite{RidSL208,RidFus10}, the triplet algebra $\func{\alg{W}}{1,2}$ of Gaberdiel and Kausch \cite{GabRat96} and the symplectic fermions algebra \cite{KauSym00} ($\AKMSA{psl}{1}{1}$).

This article describes a certain family of \emph{extended algebras} of $\AKMSA{gl}{1}{1}$.  In \cite{CR:GL11}, we noted that the fusion rules give rise to an infinite family of simple currents labelled by $n \in \RR$ and $\ell \in \ZZ$.  It follows that these algebra extensions may be computed algorithmically \cite{RidSU206,RidMin07}.  Here, we perform the computations up to a certain order, using a well-known free field realisation \cite{Guruswamy:1999hi}.  More precisely, we study the resulting W-algebras and show that, for certain infinite families of $n$ and $\ell$, there is a bosonic subalgebra which we conjecture to be the $W^{\brac{2}}_N$ algebra of Feigin and Semikhatov \cite{Feigin:2004wb}.

\section{$\SLSA{gl}{1}{1}$ and its representations} \label{secFinAlg}

\subsection{Algebraic Structure}

The Lie superalgebra $\SLSA{gl}{1}{1}$ consists of the endomorphisms of the super vector space $\CC^{1 \mid 1}$ equipped with the standard graded commutator.  It is convenient to choose the following basis,
\begin{equation} \label{eqngl11DefRep}
N = \frac{1}{2} 
\begin{pmatrix}
1 & 0 \\
0 & -1
\end{pmatrix}
, \qquad E = 
\begin{pmatrix}
1 & 0 \\
0 & 1
\end{pmatrix}
, \qquad \psi^+ = 
\begin{pmatrix}
0 & 1 \\
0 & 0
\end{pmatrix}
, \qquad \psi^- = 
\begin{pmatrix}
0 & 0 \\
1 & 0
\end{pmatrix}
,
\end{equation}
in which $N$ and $E$ are parity-preserving (bosonic) whereas $\psi^+$ and $\psi^-$ are parity-reversing (fermionic).  The non-vanishing brackets are then
\begin{equation} \label{eqngl11Rels}
\comm{N}{\psi^{\pm}} = \pm \psi^{\pm}, \qquad \acomm{\psi^+}{\psi^-} = E.
\end{equation}
We note that $E$ is central, so this superalgebra is not simple.  In fact, $\SLSA{gl}{1}{1}$ does not decompose as a direct sum of ideals.  Equivalently, the adjoint representation of $\SLSA{gl}{1}{1}$ is reducible, but indecomposable.

The standard non-degenerate bilinear form $\killing{\cdot}{\cdot}$ on $\SLSA{gl}{1}{1}$ is given by the supertrace of the product in the defining representation \eqref{eqngl11DefRep}.  With respect to the basis elements \eqref{eqngl11DefRep}, this form is
\begin{equation}
\killing{N}{E} = \killing{E}{N} = 1, \qquad \killing{\psi^+}{\psi^-} = -\killing{\psi^-}{\psi^+} = 1,
\end{equation}
with all other combinations vanishing.  From this, we compute the quadratic Casimir $Q \in \uealg{\SLSA{gl}{1}{1}}$ (up to an arbitrary polynomial in the central element $E$).  We find it convenient to take
\begin{equation} \label{eqnDefCasimir}
Q = NE + \psi^- \psi^+.
\end{equation}

\subsection{Representation Theory} \label{secFinRep}

The obvious triangular decomposition of $\SLSA{gl}{1}{1}$ regards $\psi^+$ as a raising (annihilation) operator, $\psi^-$ as a lowering (creation) operator, and $N$ and $E$ as Cartan elements.  A \hws{} of a $\SLSA{gl}{1}{1}$-representation is then defined to be an eigenstate of $N$ and $E$ which is annihilated by $\psi^+$.  Such states generate Verma modules in the usual way and as $\psi^-$ squares to zero in any representation, every Verma module has dimension $2$.  If $\brac{n,e}$ denotes the weight (the $N$- and $E$-eigenvalues) of a \hws{} generating a Verma module, then its unique descendant will have weight $\brac{n-1,e}$.  We will denote this Verma module by $\VerMod{n-1/2,e}$, remarking that the convention of characterising a \hwm{} by the \emph{average} $N$-eigenvalue of its states, rather than that of the \hws{} itself, turns out to symmetrise many of the formulae to follow.

Suppose now that $\ket{v}$ is a (generating) \hws{} of $\VerMod{n,e}$.  It satisfies
\begin{equation}
\psi^+ \psi^- \ket{v} = \acomm{\psi^+}{\psi^-} \ket{v} = E \ket{v} = e \ket{v},
\end{equation}
so the descendant $\psi^- \ket{v} \neq 0$ is a singular vector if and only if $e=0$.  Verma modules are therefore irreducible for $e \neq 0$, and have irreducible quotients of dimension $1$ when $e = 0$.  Modules with $e \neq 0$ are called \emph{typical} while those with $e = 0$ are \emph{atypical}.  We will denote a typical irreducible by $\TypMod{n,e} \cong \VerMod{n,e}$ and an atypical irreducible by $\AtypMod{n}$.  Our convention of labelling modules by their average $N$-eigenvalue leads us to define the latter to be the irreducible quotient of $\VerMod{n-1/2,0}$.  This is summarised in the short exact sequence
\begin{equation} \label{ESFinV}
\dses{\AtypMod{n-1/2}}{\VerMod{n,0}}{\AtypMod{n+1/2}}
\end{equation}
and structure diagram
\begin{equation}
\parbox[c]{0.4\textwidth}{
\begin{tikzpicture}[auto,thick,
	nom/.style={circle,draw=black!20,fill=black!20,inner sep=2pt}
	]
\node (q1) at (0,0) {$\AtypMod{n+1/2}$};
\node (s1) at (3,0) {$\AtypMod{n-1/2}$};
\node at (-2,0) [nom] {$\VerMod{n,0}$};
\node at (-1.25,0) {$:$};
\draw [->] (q1) to node {$\psi^-$} (s1);
\end{tikzpicture}
} \ .
\end{equation}
Such diagrams illustrate how the irreducible composition factors of a module are combined, with arrows indicating (schematically) the action of the algebra.

Atypical modules also appear as submodules of larger indecomposable modules.  Of particular importance are the four-dimensional projectives\footnote{We mention that the typical irreducibles are also projective in the category of finite-dimensional $\SLSA{gl}{1}{1}$-modules.} $\ProjMod{n}$ whose structure diagrams take the form
\begin{equation} \label{picStaggered}
\parbox[c]{0.28\textwidth}{
\begin{center}
\begin{tikzpicture}[auto,thick,
	nom/.style={circle,draw=black!20,fill=black!20,inner sep=2pt}
	]
\node (top) at (0,1.5) [] {$\AtypMod{n}$};
\node (left) at (-1.5,0) [] {$\AtypMod{n+1}$};
\node (right) at (1.5,0) [] {$\AtypMod{n-1}$};
\node (bot) at (0,-1.5) [] {$\AtypMod{n}$};
\node at (0,0) [nom] {$\ProjMod{n}$};
\draw [->] (top) to node [swap] {$\psi^+$} (left);
\draw [->] (top) to node {$\psi^-$} (right);
\draw [->] (left) to node [swap] {$\psi^-$} (bot);
\draw [->] (right) to node {$-\psi^+$} (bot);
\end{tikzpicture}
\end{center}
}
.
\end{equation}
We remark that these modules may be viewed as particularly simple examples of staggered modules \cite{RidSta09}.  Indeed, they may be regarded as extensions of \hwms{} via the exact sequence
\begin{equation} \label{ESFinP}
\dses{\VerMod{n+1/2,0}}{\ProjMod{n}}{\VerMod{n-1/2,0}},
\end{equation}
and one can verify that the Casimir $Q$ acts non-diagonalisably on $\ProjMod{n}$, taking the generator associated with the top $\AtypMod{n}$ factor to the generator of the bottom $\AtypMod{n}$ factor, while annihilating the other states.

\subsection{The Representation Ring}

The relevance of the projectives $\ProjMod{n}$ is that they appear in the representation ring generated by the irreducibles.\footnote{It is perhaps also worth pointing out that the adjoint representation of $\SLSA{gl}{1}{1}$ is isomorphic to $\ProjMod{0}$.}  The tensor product rules governing this ring are \cite{RozQua92}
\begin{equation} \label{RepRing}
\begin{gathered}
\AtypMod{n} \otimes \AtypMod{n'} = \AtypMod{n+n'}, \qquad
\AtypMod{n} \otimes \TypMod{n',e'} = \TypMod{n+n',e'}, \qquad
\AtypMod{n} \otimes \ProjMod{n'} = \ProjMod{n+n'}, \\
\TypMod{n,e} \otimes \TypMod{n',e'} = 
\begin{cases}
\ProjMod{n+n'} & \text{if $e+e'=0$,} \\
\TypMod{n+n'+1/2,e+e'} \oplus \TypMod{n+n'-1/2,e+e'} & \text{otherwise,}
\end{cases}
\\
\TypMod{n,e} \otimes \ProjMod{n'} = \TypMod{n+n'+1,e} \oplus 2 \: \TypMod{n+n',e} \oplus \TypMod{n+n'-1,e}, \qquad
\ProjMod{n} \otimes \ProjMod{n'} = \ProjMod{n+n'+1} \oplus 2 \: \ProjMod{n+n'} \oplus \ProjMod{n+n'-1}.
\end{gathered}
\end{equation}
There are other indecomposables which may be constructed from submodules and quotients of the $\ProjMod{n}$ by taking tensor products.  We will not need them and refer to \cite{GotRep07} for further discussion.

\section{$\AKMSA{gl}{1}{1}$ and its Representations} \label{secAffine}

\subsection{Algebraic Structure} \label{secAffAlg}

Our conventions for $\SLSA{gl}{1}{1}$ carry over to its affinisation $\AKMSA{gl}{1}{1}$ in the usual way.  Explicitly, the non-vanishing brackets are
\begin{equation}
\comm{N_r}{E_s} = r k \delta_{r+s,0}, \qquad \comm{N_r}{\psi^{\pm}_s} = \pm \psi^{\pm}_{r+s}, \qquad \acomm{\psi^+_r}{\psi^-_s} = E_{r+s} + r k \delta_{r+s,0},
\end{equation}
where $k \in \RR$ is called the level and $r,s \in \ZZ$.  We emphasise that when $k \neq 0$, the generators can be rescaled so as to normalise $k$ to $1$:
\begin{equation} \label{eqnLevelScaling}
N_r \longrightarrow N_r, \qquad E_r \longrightarrow \frac{E_r}{k}, \qquad \psi^{\pm}_r \longrightarrow \frac{\psi^{\pm}_r}{\sqrt{k}}.
\end{equation}
As in the more familiar case of $\AKMA{u}{1}$, we see that the actual value of $k \neq 0$ is not physical.

The Virasoro generators are constructed using (a modification of) the Sugawara construction.  Because the quadratic Casimir of $\SLSA{gl}{1}{1}$ is only defined modulo polynomials in $E$, one tries the ansatz \cite{RozQua92}
\begin{equation} \label{eqnDefT}
\func{T}{z} = \mu \func{\normord{NE + EN - \psi^+ \psi^- + \psi^- \psi^+}}{z} + \nu \func{\normord{EE}}{z},
\end{equation}
finding that this defines an energy-momentum tensor if and only if $\mu = 1/2k$ and $\nu = 1/2k^2$.  Moreover, the $\AKMSA{gl}{1}{1}$ currents $\func{N}{z}$, $\func{E}{z}$ and $\func{\psi^{\pm}}{z}$ are found to be Virasoro primaries of conformal dimension $1$ and the central charge is zero.

The structure theory of \hwms{} for $\AKMSA{gl}{1}{1}$ turns out to be particularly accessible because of certain automorphisms.  These consist of the automorphism $\mathsf{w}$ which defines the notion of conjugation and the family \cite{SalGL106} of spectral flow automorphisms $\sigma^{\ell}$, $\ell \in \ZZ$.  Explicitly,
\begin{equation}
\begin{aligned}
\func{\mathsf{w}}{N_r} &= -N_{r}, \\
\func{\sigma^{\ell}}{N_r} &= N_r,
\end{aligned}
\qquad
\begin{aligned}
\func{\mathsf{w}}{E_r} &= -E_{r}, \\
\func{\sigma^{\ell}}{E_r} &= E_r - \ell k \delta_{r,0},
\end{aligned}
\qquad
\begin{aligned}
\func{\mathsf{w}}{\psi^{\pm}_r} &= \pm \psi^{\mp}_{r}, \\
\func{\sigma^{\ell}}{\psi^{\pm}_r} &= \psi^{\pm}_{r \mp \ell},
\end{aligned}
\qquad
\begin{aligned}
\func{\mathsf{w}}{L_0} &= L_0. \\
\func{\sigma^{\ell}}{L_0} &= L_0 - \ell N_0.
\end{aligned}
\end{equation}
These automorphisms may be used to construct new modules $\func{\mathsf{w}^*}{\mathcal{M}}$ and $\func{\sigma^*}{\mathcal{M}}$ by twisting the action of the algebra on a module $\mathcal{M}$:
\begin{equation} \label{eqnInducedAction}
J \cdot \tfunc{\mathsf{w}^*}{\ket{v}} = \func{\mathsf{w}^*}{\tfunc{\mathsf{w}^{-1}}{J} \ket{v}}, \qquad J \cdot \tfunc{\sigma^*}{\ket{v}} = \func{\sigma^*}{\tfunc{\sigma^{-1}}{J} \ket{v}} \qquad \text{($J \in \AKMSA{gl}{1}{1}$).}
\end{equation}
Note that $\func{\mathsf{w}^*}{\mathcal{M}}$ is precisely the module conjugate to $\mathcal{M}$.

\subsection{Representation Theory} \label{secAffRep}

We can now define affine \hwss{}, affine Verma modules $\AffVerMod{n,\ell}$, and their irreducible quotients as before.  We remark only that \eqref{eqnLevelScaling} suggests that we characterise modules by the invariant ratio $\ell = e/k$ rather than by the $E_0$-eigenvalue $e$.  The affine \hws{} $\ket{v_{n,\ell}}$ of $\AffVerMod{n,\ell}$, whose weight (its $N_0$- and $E_0/k$-eigenvalues) is $\brac{n + \tfrac{1}{2}, \ell}$, has conformal dimension
\begin{equation} \label{eqnConfDim}
\Delta_{n,\ell} = n \ell + \frac{1}{2} \ell^2.
\end{equation}
Of course, this formula also applies to singular vectors.  Again, the label $n$ refers to the average $N_0$-eigenvalue of the zero-grade subspace of $\AffVerMod{n,\ell}$, generalising the labelling convention of \secref{secFinRep}.

Verma modules for $\AKMSA{gl}{1}{1}$ are infinite-dimensional and their characters have the form
\begin{equation} \label{eqnCharVerma}
\ch{\AffVerMod{n,\ell}}{z;q} = \traceover{\AffVerMod{n,\ell}} z^{N_0} q^{L_0} = z^{n+1/2} q^{\Delta_{n,\ell}} \prod_{i=1}^{\infty} \frac{\brac{1 + z q^i} \brac{1 + z^{-1} q^{i-1}}}{\brac{1 - q^i}^2}.
\end{equation}
For the irreducible quotients, the case with $\ell = 0$ is particularly easy.  As in \secref{secFinRep}, we regard $\brac{n,\ell}$ (and modules so-labelled) as being \emph{typical} if $\AffVerMod{n,\ell}$ is irreducible and \emph{atypical} otherwise.
\begin{proposition} \label{prop:ell=0}
The affine Verma module $\AffVerMod{n,0}$ has an exact sequence
\begin{equation}
\dses{\AffAtypMod{n-1/2,0}}{\AffVerMod{n,0}}{\AffAtypMod{n+1/2,0}}
\end{equation}
in which the $\AffAtypMod{n,0}$ are (atypical) irreducibles whose characters are given by
\begin{equation} \label{eqnCharVac}
\ch{\AffAtypMod{n,0}}{z;q} = z^n \prod_{i=1}^{\infty} \frac{\brac{1 + z q^i} \brac{1 + z^{-1} q^i}}{\brac{1 - q^i}^2}.
\end{equation}
\end{proposition}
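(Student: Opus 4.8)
The plan is to determine the submodule structure of $\AffVerMod{n,0}$ by exhibiting its singular vectors explicitly, and then to extract the character of the simple quotient from a character recursion. Let $\ket{v}$ generate $\AffVerMod{n,0}$, so that $\ket v$ is annihilated by every positive mode and by $\psi^+_0$, with $N_0 \ket v = \brac{n+\tfrac{1}{2}}\ket v$ and $E_0 \ket v = 0$. A short computation with the brackets of \secref{secAffAlg} shows that $\psi^-_0 \ket v$ is likewise annihilated by all positive modes and by $\psi^+_0$ --- the one nontrivial point being $\psi^+_0\psi^-_0\ket v = E_0 \ket v = 0$ --- so it is a singular vector of weight $\brac{n-\tfrac{1}{2},0}$ lying in grade $0$. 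The key structural input is that $E_0$ is central and hence acts as $0$ on all of $\AffVerMod{n,0}$; therefore every singular vector has vanishing $E_0$-eigenvalue and, by \eqref{eqnConfDim}, conformal dimension $0$, which confines it to the grade-zero subspace.

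Next I would convert this into the submodule structure. The grade-zero subspace of $\AffVerMod{n,0}$ is a copy of the finite-dimensional Verma module $\VerMod{n,0}$, whose unique nonzero proper submodule is $\CC\, \psi^-_0\ket v \cong \AtypMod{n-1/2}$ by \eqref{ESFinV}. The standard argument --- take a nonzero submodule, pass to its lowest occupied grade, note that this grade is annihilated by all positive modes and is thus a finite-dimensional $\SLSA{gl}{1}{1}$-submodule, which by the previous paragraph can only sit in grade $0$ --- then shows that the submodule $\mathcal{K}$ generated by $\psi^-_0\ket v$ is the unique maximal proper submodule. Consequently $\AffVerMod{n,0}$ is reducible (so $\brac{n,0}$ is atypical) and $\AffVerMod{n,0}/\mathcal{K}$ is simple, equal to $\AffAtypMod{n+1/2,0}$ by the labelling convention. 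Since $\brac{\psi^-_0}^2 = 0$, the generator $\psi^-_0\ket v$ of $\mathcal{K}$ satisfies exactly the relations defining the generating \hws{} of $\AffAtypMod{n-1/2,0}$ (which, by the same argument applied with $n$ replaced by $n-1$, is $\AffVerMod{n-1,0}$ modulo its maximal submodule); hence $\mathcal{K}$ is a nonzero quotient of the simple module $\AffAtypMod{n-1/2,0}$, so $\mathcal{K} \cong \AffAtypMod{n-1/2,0}$. This yields the asserted exact sequence. I expect this middle step --- showing that $\mathcal{K}$ exhausts the proper submodules, i.e. that there are no hidden singular vectors at positive grade --- to be the crux; it rests entirely on the centrality of $E_0$, which is what collapses the problem to the finite-dimensional picture of \secref{secFinRep}.

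For the character, \eqref{eqnCharVerma} together with the identity $\prod_{i\ge1}\brac{1+z^{-1}q^{i-1}} = \brac{1+z^{-1}}\prod_{i\ge1}\brac{1+z^{-1}q^{i}}$ gives $\ch{\AffVerMod{n,0}}{z;q} = \brac{z^{n+1/2}+z^{n-1/2}}\,f$, where $f = f(z;q) = \prod_{i\ge1}\frac{\brac{1+zq^i}\brac{1+z^{-1}q^i}}{\brac{1-q^i}^2}$. Writing $a_m = \ch{\AffAtypMod{m,0}}{z;q}$, the exact sequence, applied for every value of the label, gives the recursion $a_m + a_{m-1} = \brac{z^m + z^{m-1}}\,f$, whose general solution is $a_m = z^m f + (-1)^m c$ with $c$ independent of $m$. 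But each $a_m$ has non-negative coefficients, while for any fixed monomial $z^a q^b$ only finitely many $m$ contribute a nonzero $z^a q^b$-coefficient to $z^m f$; comparing the two parities of $m$ then forces $c = 0$. Hence $a_m = z^m f(z;q)$, which is \eqref{eqnCharVac}.
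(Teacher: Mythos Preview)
Your argument is correct and, for the structural part, matches the paper's proof: both use \eqnref{eqnConfDim} with $\ell=0$ to confine all singular vectors to grade~$0$, identify the maximal proper submodule as that generated by $\psi^-_0\ket{v}$, and then recognise this submodule as the irreducible $\AffAtypMod{n-1/2,0}$ via the relation $(\psi^-_0)^2=0$.

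Where you diverge is in the character computation.  The paper obtains \eqref{eqnCharVac} ``trivially'' by a direct PBW count: once one knows that $\AffAtypMod{n+1/2,0}$ is $\AffVerMod{n,0}$ modulo the relation $\psi^-_0\ket{v}=0$, a spanning-plus-comparison argument against the induced module with one-dimensional grade-zero subspace shows that monomials in the strictly negative modes form a basis, giving the product formula immediately.  Your route --- solving the character recursion $a_m+a_{m-1}=(z^m+z^{m-1})f$ and killing the homogeneous solution by a positivity/finite-support argument --- is a legitimate and self-contained alternative that avoids any appeal to an explicit basis of the quotient.  The PBW approach is shorter; your recursion approach has the mild advantage of being entirely character-theoretic once the exact sequence is in hand.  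One cosmetic point: writing $(-1)^m$ presumes $m\in\ZZ$, whereas the label ranges over $\RR$; your argument really runs along a single coset $m_0+\ZZ$, with the alternating homogeneous solution indexed by the integer offset, and you should phrase it that way.
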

\begin{proof}
Since $\ell = 0$, every singular vector of $\AffVerMod{n,0}$ has dimension $0$ by \eqnref{eqnConfDim}.  The space of singular vectors is thus spanned by $\ket{v_{n,0}}$ and $\psi^-_0 \ket{v_{n,0}}$.  Taking the quotient by the module generated by $\psi^-_0 \ket{v_{n,0}}$ gives a module with a one-dimensional zero-grade subspace.  The only singular vector is then the \hws{}, so this quotient is irreducible.  We denote it by $\AffAtypMod{n+1/2,0}$ as its zero-grade subspace has $N_0$-eigenvalue $n+\tfrac{1}{2}$.  Its character follows trivially.  The submodule of $\AffVerMod{n,0}$ generated by $\psi^-_0 \ket{v_{n,0}}$ is not a Verma module because $\bigl( \psi^-_0 \bigr)^2 \ket{v_{n,0}} = 0$.  It must therefore be a proper quotient of $\AffVerMod{n-1,0}$ and, by the above argument, the only such quotient is the irreducible $\AffAtypMod{n-1/2,0}$.  The exact sequence follows.
\end{proof}
For $\ell \neq 0$, one proves by direct calculation \cite{CR:GL11} that for $0 < \abs{\ell} < 1$, $\AffVerMod{n,\ell}$ is irreducible.  In other words, the corresponding irreducibles are typical, hence we denote them by $\AffTypMod{n,\ell}$.  For $\abs{\ell} \geqslant 1$, the structure of the Verma modules now follows from considering the induced action of the spectral flow automorphisms.  More precisely, one proves \cite{CR:GL11} that any Verma module is isomorphic to a twisted version of a Verma module with $-1 < \abs{\ell} < 1$ (or the conjugate of such a Verma module).  We summarise the result as follows.
\begin{proposition}
When $\ell \notin \ZZ$, the affine Verma module $\AffVerMod{n,\ell}$ is irreducible, $\AffVerMod{n,\ell} \cong \AffTypMod{n,\ell}$, so its character is given by \eqnref{eqnCharVerma}.  When $\ell \in \ZZ$, the affine Verma module $\AffVerMod{n,\ell}$ has an exact sequence
\begin{equation}
\begin{gathered}
\dses{\AffAtypMod{n+1,\ell}}{\AffVerMod{n,\ell}}{\AffAtypMod{n,\ell}} \qquad \text{($\ell = +1,+2,+3,\ldots$),} \\
\dses{\AffAtypMod{n-1,\ell}}{\AffVerMod{n,\ell}}{\AffAtypMod{n,\ell}} \qquad \text{($\ell = -1,-2,-3,\ldots$),}
\end{gathered}
\end{equation}
in which the $\AffAtypMod{n,\ell}$ are (atypical) irreducibles whose characters are given by
\begin{equation} \label{eqnCharAtyp}
\ch{\AffAtypMod{n,\ell}}{z;q} = 
\begin{cases}
\displaystyle \frac{z^{n+1/2} q^{\Delta_{n,\ell}}}{1 + zq^{\ell}} \prod_{i=1}^{\infty} \frac{\brac{1 + z q^i} \brac{1 + z^{-1} q^{i-1}}}{\brac{1 - q^i}^2} & \text{($\ell = +1,+2,+3,\ldots$),} \\
\displaystyle \frac{z^{n+1/2} q^{\Delta_{n,\ell}}}{1 + z^{-1} q^{-\ell}} \prod_{i=1}^{\infty} \frac{\brac{1 + z q^i} \brac{1 + z^{-1} q^{i-1}}}{\brac{1 - q^i}^2} & \text{($\ell = -1,-2,-3,\ldots$).}
\end{cases}
\end{equation}
(The exact sequence and character for $\ell = 0$ was given in \propref{prop:ell=0}.)
\end{proposition}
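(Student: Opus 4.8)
The plan is to bootstrap everything from the reduction, established in \cite{CR:GL11}, of an arbitrary Verma module to one with $-1 < \abs{\ell'} < 1$ by means of the spectral flows $\sigma^{\ell}$ and (at most one) conjugation $\mathsf{w}$, together with \propref{prop:ell=0} for the residual $\ell' = 0$ case. The point is that twisting a module by any of these automorphisms is an exact auto-equivalence of the category of $\AKMSA{gl}{1}{1}$-modules (its inverse being the twist by the inverse automorphism), so it preserves irreducibility, indecomposability and short exact sequences. Since $\func{\sigma^{\ell}}{E_0} = E_0 - \ell k$, a $\sigma$-twist shifts the $\ell$-label by an integer and $\mathsf{w}$ negates it; hence for $\ell \notin \ZZ$ the $\abs{\ell'} < 1$ representative has $0 < \abs{\ell'} < 1$, while for $\ell \in \ZZ$ it has $\ell' = 0$.

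For $\ell \notin \ZZ$, write $\AffVerMod{n,\ell} \cong F\bigl( \AffVerMod{n',\ell'} \bigr)$ with $0 < \abs{\ell'} < 1$ and $F$ the relevant composite of twists. The direct computation quoted above shows $\AffVerMod{n',\ell'}$ is irreducible, hence so is $\AffVerMod{n,\ell}$; thus $\AffVerMod{n,\ell} \cong \AffTypMod{n,\ell}$, with character \eqnref{eqnCharVerma}.

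For $\ell \in \ZZ \setminus \{0\}$, write instead $\AffVerMod{n,\ell} \cong F\bigl( \AffVerMod{n',0} \bigr)$ and apply the exact functor $F$ to the short exact sequence of \propref{prop:ell=0}. This gives $\dses{F(\AffAtypMod{n'-1/2,0})}{\AffVerMod{n,\ell}}{F(\AffAtypMod{n'+1/2,0})}$, with irreducible outer terms. To name them, use $\func{\sigma^{\ell}}{N_r} = N_r$ to compute the $N_0$-eigenvalues on their zero-grade subspaces and apply the convention of labelling an atypical irreducible by the average of these. For $\ell > 0$ the submodule sits at conformal dimension $\Delta_{n,\ell} + \ell$, above the zero-grade subspace of $\AffVerMod{n,\ell}$, so the quotient retains the image of the generating highest-weight vector and is $\AffAtypMod{n,\ell}$ while the submodule is $\AffAtypMod{n+1,\ell}$; for $\ell < 0$ the composite $F$ carries the opposite sign of spectral flow, and the submodule comes out as $\AffAtypMod{n-1,\ell}$. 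This yields the two displayed sequences.

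It remains to compute characters. Conjugation sends $\ch{M}{z;q} \mapsto \ch{M}{z^{-1};q}$ (from $\func{\mathsf{w}}{N_0} = -N_0$, $\func{\mathsf{w}}{L_0} = L_0$), and from $\func{\sigma^{\ell}}{N_0} = N_0$, $\func{\sigma^{\ell}}{L_0} = L_0 - \ell N_0$ one obtains $\ch{\sigma^{\ell*}(M)}{z;q} = \ch{M}{zq^{\ell};q}$. Applying the relevant such transformation to the character \eqnref{eqnCharVac} of $\AffAtypMod{n',0}$, with the $n'$ determined above, produces \eqnref{eqnCharAtyp}; internal consistency with the exact sequence is the identity $\frac{1}{1+zq^{\ell}} + \frac{zq^{\ell}}{1+zq^{\ell}} = 1$ for $\ell > 0$ (and its $z \to z^{-1}$, $\ell \to -\ell$ image for $\ell < 0$), which also re-confirms the submodule/quotient assignment. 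The main obstacle, I expect, is exactly this bookkeeping: fixing the composite $F$ and the dependence $n' = n'(n,\ell)$, and then reorganising the infinite products --- in particular the factors $\prod_i (1 + z^{-1} q^{i-\ell})$, which acquire negative powers of $q$ under $z \mapsto zq^{\ell}$ --- into the closed form of \eqnref{eqnCharAtyp} with denominator $1 + zq^{\ell}$. These are routine $q$-series manipulations but must be handled carefully. A secondary point, automatic from the use of an equivalence, is that the three twisted modules genuinely assemble into the asserted short exact sequence rather than some other extension with the same composition factors.
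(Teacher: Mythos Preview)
Your proposal is correct and follows precisely the approach the paper itself sketches immediately before the proposition: reduce via the spectral flow and conjugation automorphisms (established in \cite{CR:GL11}) to a Verma module with $\abs{\ell'}<1$, invoke the direct irreducibility calculation for $0<\abs{\ell'}<1$ and \propref{prop:ell=0} for $\ell'=0$, and transport the exact sequence and characters through the resulting equivalence. The paper gives no further detail than this outline, so your expanded treatment of the labelling and the $q$-series bookkeeping is entirely in the same spirit and simply fills in what the paper leaves implicit.
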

\noindent Note that the $\AffVerMod{n,\ell}$ with $\ell \in \ZZ$ have a non-trivial singular vector at grade $\abs{\ell}$.  We emphasise that the $\AffAtypMod{n,\ell}$ with $\ell \neq 0$ therefore possess a two-dimensional zero-grade subspace.

This description of the Verma modules, their irreducible quotients and characters relies upon being able to identify the result of applying the spectral flow automorphisms to modules.  For irreducibles, we have
\begin{equation}
\tfunc{\bigl( \sigma^{\ell'} \bigr)^*}{\AffTypMod{n,\ell}} = \AffTypMod{n-\ell',\ell+\ell'}, \qquad \tfunc{\bigl( \sigma^{\ell'} \bigr)^*}{\AffAtypMod{n,\ell}} = \AffAtypMod{n-\ell'+\func{\eps}{\ell+\ell'}-\func{\eps}{\ell},\ell+\ell'},
\end{equation}
where we introduce a convenient variant $\eps$ of the sign function on $\ZZ$, defined by taking $\func{\eps}{\ell}$ to be $\tfrac{1}{2}$, $0$ or $-\tfrac{1}{2}$ according as to whether $\ell \in \ZZ$ is positive, zero or negative, respectively.

\subsection{Fusion} \label{secAffFus}

The fusion rules of the irreducible $\AKMSA{gl}{1}{1}$-modules (among others) were first deduced in \cite{Creutzig:2007jy} using three-point functions computed in a free field realisation and a conjectured completeness of the spectrum.  These rules and the spectrum conjecture were confirmed in \cite{CR:GL11} through a direct argument involving the Nahm-Gaberdiel-Kausch fusion algorithm \cite{NahQua94,GabInd96} and spectral flow.  The fusion ring generated by the irreducibles may be understood \cite{Quella:2007hr} as a ``constrained lift'' of the representation ring \eqref{RepRing} of $\SLSA{gl}{1}{1}$ where the constraints are effectively implemented by spectral flow.  Explicitly, the rules are
\begin{equation} \label{Fusion}
\begin{gathered}
\AffAtypMod{n,\ell} \fuse \AffAtypMod{n',\ell'} = \AffAtypMod{n+n'-\func{\eps}{\ell,\ell'},\ell+\ell'}, \quad
\AffAtypMod{n,\ell} \fuse \AffTypMod{n',\ell'} = \AffTypMod{n+n'-\func{\eps}{\ell},\ell+\ell'}, \quad
\AffAtypMod{n,\ell} \fuse \AffProjMod{n',\ell'} = \AffProjMod{n+n'-\func{\eps}{\ell,\ell'},\ell+\ell'}, \\
\AffTypMod{n,\ell} \fuse \AffTypMod{n',\ell'} = 
\begin{cases}
\AffProjMod{n+n'+\func{\eps}{\ell+\ell'},\ell+\ell'} & \text{if $\ell+\ell'=0$,} \\
\AffTypMod{n+n'+1/2,\ell+\ell'} \oplus \AffTypMod{n+n'-1/2,\ell+\ell'} & \text{otherwise,}
\end{cases}
\\
\AffTypMod{n,\ell} \fuse \AffProjMod{n',\ell'} = \AffTypMod{n+n'+1-\func{\eps}{\ell'},\ell+\ell'} \oplus 2 \: \AffTypMod{n+n'-\func{\eps}{\ell'},\ell+\ell'} \oplus \AffTypMod{n+n'-1-\func{\eps}{\ell'},\ell+\ell'}, \\
\AffProjMod{n,\ell} \fuse \AffProjMod{n',\ell'} = \AffProjMod{n+n'+1-\func{\eps}{\ell,\ell'},\ell+\ell'} \oplus 2 \: \AffProjMod{n+n'-\func{\eps}{\ell,\ell'},\ell+\ell'} \oplus \AffProjMod{n+n'-1-\func{\eps}{\ell,\ell'},\ell+\ell'}.
\end{gathered}
\end{equation}
Here, we have defined $\func{\eps}{\ell , \ell'} = \func{\eps}{\ell} + \func{\eps}{\ell'} - \func{\eps}{\ell + \ell'}$ for convenience.

These fusion rules also introduce the indecomposable modules $\AffProjMod{n,\ell}$ which are the counterparts of the projective $\SLSA{gl}{1}{1}$-modules $\ProjMod{n}$ discussed in \secref{secFinRep}.\footnote{More precisely, $\AffProjMod{n,0}$ is the affine counterpart to $\ProjMod{n}$ and the remaining $\AffProjMod{n,\ell}$ are obtained by spectral flow.}  The $\AffProjMod{n,\ell}$ are staggered with structure diagram
\begin{equation} \label{picAffineStaggered}
\parbox[c]{0.28\textwidth}{
\begin{center}
\begin{tikzpicture}[auto,thick,
	nom/.style={circle,draw=black!20,fill=black!20,inner sep=2pt}
	]
\node (top) at (0,1.5) [] {$\AffAtypMod{n,\ell}$};
\node (left) at (-1.5,0) [] {$\AffAtypMod{n+1,\ell}$};
\node (right) at (1.5,0) [] {$\AffAtypMod{n-1,\ell}$};
\node (bot) at (0,-1.5) [] {$\AffAtypMod{n,\ell}$};
\node at (0,0) [nom] {$\AffProjMod{n,\ell}$};
\draw [->] (top) to (left);
\draw [->] (top) to (right);
\draw [->] (left) to (bot);
\draw [->] (right) to (bot);
\end{tikzpicture}
\end{center}
}
\end{equation}
and a non-diagonalisable action of the Virasoro mode $L_0$.  It follows that \cfts{} whose spectra contain typical modules will also contain such $\AffProjMod{n,\ell}$ (by fusion), and so will be \emph{logarithmic}.

\section{W-Algebras extending $\AKMSA{gl}{1}{1}$} \label{secExtAlg}

\subsection{Chiral Algebra Extensions}

Our search for extended algebras is guided by the following considerations:  First, note that if we choose to extend by a zero-grade field associated to any irreducible $\AKMSA{gl}{1}{1}$-module, then we must include the rest of its zero-grade fields in the extension.  Second, the fields we extend by should be closed under conjugation.  Third, extending by fields from typical irreducibles will lead to logarithmic behaviour in the extended chiral algebra because fusing typicals with their conjugates yields the staggered indecomposable $\AffProjMod{0,0}$.

It seems then that the most tractable extensions will involve zero-grade fields from atypical modules $\AffAtypMod{n,\ell}$ and their conjugates $\AffAtypMod{-n,-\ell}$.  The simplest extension we could hope for would involve a single atypical and its conjugate and have the further property that these extension fields generate no new fields at the level of the commutation relations.  This may be achieved for extension fields of integer or half-integer conformal dimension by requiring that the \opes{} of the zero-grade fields of $\AffAtypMod{n,\ell}$ are regular.  From the fusion rules \eqref{Fusion}, we obtain
\begin{equation}
\AffAtypMod{n,\ell} \fuse \AffAtypMod{n,\ell} = \AffAtypMod{2n - \func{\eps}{\ell},2\ell},
\end{equation}
from which it follows that the zero-grade fields of $\AffAtypMod{n,\ell}$ will have regular \opes{} with one another if $2 \: \Delta_{n,\ell} \leqslant \Delta_{2n - \func{\eps}{\ell},2\ell}$, that is, if
\begin{equation}\label{eqdim}
\abs{\ell} \leqslant 2 \: \Delta_{n,\ell}.
\end{equation}
We may take $\ell$ positive without loss of generality.  Further, we require that the conformal dimension of the extension fields be a positive half-integer (so $2n\ell \in \ZZ$).  \eqnref{eqdim} then implies that there are $m$ distinct possibilities to extend by fields of dimension $m/2$.
We denote by $\alg{W}_{n,\ell}$ the algebra obtained upon extending $\AKMSA{gl}{1}{1}$ by the atypical module $\AffAtypMod{n,\ell}$ and its conjugate $\AffAtypMod{-n,-\ell}$.

\subsection{Characters of Extended Algebras}

The complete extended algebra also contains normally-ordered products of the extension fields and their descendants.  Indeed, the extended algebra $\alg{W}_{n,\ell}$ may be identified, at least at the level of graded vector spaces, with the orbit of the $\AKMSA{gl}{1}{1}$ vacuum module under fusion by the simple current modules $\AffAtypMod{n,\ell}$ and $\AffAtypMod{-n,-\ell}$.  In other words,
\begin{equation}
\alg{W}_{n+1/2,\ell} = \AffAtypMod{0,0} \oplus \bigoplus_{m=1}^{\infty} \bigl( \AffAtypMod{mn+1/2,m\ell} \oplus \AffAtypMod{-mn-1/2,-m\ell} \bigr).
\end{equation}
The character of the extended vacuum module is therefore
\begin{equation} \label{eqnCharW}
\begin{split}
\ch{\alg{W}_{n+1/2,\ell}}{y;z;q} &= \ch{\AffAtypMod{0,0}}{y,z;q} + \sum_{m=1}^\infty \Bigl[ \ch{\AffAtypMod{mn+1/2, m \ell}}{y;z;q} + \ch{\AffAtypMod{-mn-1/2, -m \ell}}{y;z;q} \Bigr] \\
&= z \sum_{m \in \ZZ} \frac{y^{m \ell} z^{mn} q^{\brac{mn+1/2} m \ell + m^2 \ell^2 / 2}}{1 + z q^{m \ell}} \cdot \prod_{i=1}^{\infty} \frac{\brac{1 + z q^i} \brac{1 + z^{-1} q^{i-1}}}{\brac{1 - q^i}^2}.
\end{split}
\end{equation}
Here, we have introduced an additional formal variable $y$ in order to keep track of the eigenvalues of $E_0 / k$.  One can likewise identify the irreducible modules of the extended algebra with the other orbits of the extension modules.  We will not consider these modules, their characters, nor their interesting modular properties here, but will return to this in a future publication.

\subsection{Free Field Realisations} \label{appFreeFields}

The affine Kac-Moody superalgebra $\AKMSA{gl}{1}{1}$ has two well-known free field realizations, the standard Wakimoto realization \cite{SalGL106} and one constructed from a pair of symplectic fermions, a euclidean boson, and a lorentzian boson \cite{Guruswamy:1999hi}.  An explicit equivalence between the two realisations was established in \cite{CR09}.  Here, we review the latter one.

We take the symplectic fermions $\chi^{\pm}$ and bosons $Y$, $Z$ to have the following \opes{}:
\begin{equation}
\func{\chi^+}{z} \func{\chi^-}{w} = \frac{1}{\brac{z-w}^2} + \text{ regular terms}, \qquad 
\func{\partial Y}{z} \func{\partial Z}{w} = \frac{1}{\brac{z-w}^2} + \text{ regular terms}
\end{equation}
(the others are regular).  The $\AKMSA{gl}{1}{1}$ current fields are then given by
\begin{equation} \label{eqnGL11FFR}
\func{E}{z} = k \func{\partial Y}{z}, \qquad \func{N}{z} = \func{\partial Z}{z}, \qquad \func{\psi^{\pm}}{z} = \sqrt{k} \vertop{\pm \func{Y}{z}} \func{\chi^{\pm}}{z},
\end{equation}
and a moderately tedious computation shows that the $\AKMSA{gl}{1}{1}$ energy momentum tensor \eqref{eqnDefT} indeed corresponds to the sum of those of the bosonic and symplectic fermion systems.

It remains to construct the $\AKMSA{gl}{1}{1}$ primaries that generate our extended algebras.  As these correspond to atypical modules, this is relatively straight-forward.  First, we introduce some convenient notation:
 Let $X_{n,\ell}$ be the bosonic linear combination $n Y + \ell Z$ and define composite fields $F^{\pm}_r$, with $r \in \NN$, by $F^{\pm}_0 = 1$ and $F^{\pm}_r = \normord{F^{\pm}_{r-1} \partial^{r-1} \chi^{\pm}}$ for $r \geqslant 1$.  The conformal dimension of $F^{\pm}_r$ is then $\tfrac{1}{2} r \brac{r+1}$.  The zero-grade fields of the atypicals $\AffAtypMod{n,\ell}$ for $\ell > 0$ have conformal dimension $\Delta_{n,\ell} = \ell \brac{n + \ell / 2}$ and are realised by
\begin{equation}
V_{n,\ell}^+ = \vertop{X_{n+1/2,\ell}} F^-_{\ell-1}, \qquad V_{n,\ell}^- = \vertop{X_{n-1/2,\ell}} F^-_{\ell}.
\end{equation}
This follows from their \opes{} with the $\AKMSA{gl}{1}{1}$ currents:
\begin{equation}
\begin{aligned}
\func{N}{z} \func{V_{n,\ell}^{\pm}}{w} &= \frac{\brac{n \pm 1/2} \: \func{V_{n,\ell}^{\pm}}{w}}{z-w} + \ldots , \\
\func{E}{z} \func{V_{n,\ell}^{\pm}}{w} &= \frac{\ell k \: \func{V_{n,\ell}^{\pm}}{w}}{z-w} + \ldots ,
\end{aligned}
\qquad
\begin{aligned}
\func{\psi^+}{z} \func{V_{n,\ell}^-}{w} &= \brac{-1}^{\ell - 1} \ell ! \frac{\sqrt{k} \: \func{V_{n,\ell}^+}{w}}{z-w} + \ldots , \\
\func{\psi^-}{z} \func{V_{n,\ell}^+}{w} &= \frac{(-1)^{\ell-1}}{\brac{\ell-1}!} \frac{\sqrt{k} \: \func{V_{n,\ell}^-}{w}}{z-w} + \ldots ,
\end{aligned}
\end{equation}
the others being regular.  The zero-grade fields of the conjugate module $\AffAtypMod{-n,-\ell}$ are realised as
\begin{equation}
V_{-n,-\ell}^+ = \vertop{X_{-n+1/2,-\ell}} F^+_{\ell}, \qquad V_{-n,-\ell}^- = \vertop{X_{-n-1/2,-\ell}} F^+_{\ell-1}.
\end{equation}
Their \opes{} with the current fields are similar.

\subsection{The Extended Operator Product Algebra}

In order to compute the leading contributions to the extended algebra \opes{}, we need the expansion of the bosonic vertex operators.  To second order, this is
\begin{multline} \label{eqnVertexOPE}
\vertop{\func{X_{n,\ell}}{z}} \vertop{\func{X_{n',\ell'}}{w}} = \brac{z-w}^{n\ell'+n'\ell} \biggl[ \vertop{\func{X_{n+n',\ell+\ell'}}{w}} + \normord{\func{\partial X_{n,\ell}}{w} \Vertop{\func{X_{n+n',\ell+\ell'}}{w}}} \brac{z-w} \Biggr. \\
\Biggl. + \frac{1}{2} \normord{\Bigl( \func{\partial X_{n,\ell}}{w} \func{\partial X_{n,\ell}}{w} + \func{\partial^2 X_{n,\ell}}{w} \Bigr) \Vertop{\func{X_{n+n',\ell+\ell'}}{w}}} \brac{z-w}^2 + \ldots \biggr].
\end{multline}
Note that it follows that $\vertop{\func{X_{n,\ell}}{w}}$ and $\vertop{\func{X_{n',\ell'}}{w}}$ will be mutually bosonic when $n\ell' + n'\ell$ is an even integer and mutually fermionic when $n\ell' + n'\ell$ is odd.  The implication of this for the statistics of the extended algebra generators $V_{n,\ell}^{\pm}$ and  $V_{-n,-\ell}^{\pm}$ is a little subtle.  It turns out that when $2n \ell$ is even, these generators may be consistently assigned a bosonic or fermionic parity --- $\alg{W}_{n,\ell}$ is a superalgebra.  In fact, $V_{n,\ell}^+$ and $V_{-n,-\ell}^-$ will be fermions and $V_{n,\ell}^-$ and $V_{-n,-\ell}^+$ will be bosons in this case.  However, when $2n \ell$ is odd, such an assignment is impossible --- $\alg{W}_{n,\ell}$ is \emph{not} a superalgebra.  In this case, separately taking $V_{n,\ell}^+$ and $V_{-n,-\ell}^-$ to be bosons and $V_{n,\ell}^-$ and $V_{-n,-\ell}^+$ to be fermions is consistent, but the mutual locality of a boson and a fermion will now be $-1$ instead of $+1$.  We will remark further on this subtlety in \secref{secExamples}.

We moreover need the leading terms of certain \opes{} of the $F^{\pm}_r$.  In particular,
\begin{equation} \label{eqnCompositeSFOPEs}
\begin{split}
\func{F^+_r}{z} \func{F^-_r}{w} &= \brac{z-w}^{-r \brac{r+1}} \biggl[ \mu_r^{\brac{0}} + \mu_{r-1}^{\brac{2}} \normord{\func{\chi^+}{w} \func{\chi^-}{w}} \brac{z-w}^2 + \ldots \biggr] , \\
\func{F^-_{r-1}}{z} \func{F^+_r}{w} &= \brac{z-w}^{-\brac{r-1} \brac{r+1}} \biggl[ \mu_{r-1}^{\brac{1}} \: \func{\chi^+}{w} + \ldots \biggr] , \\
\func{F^-_r}{z} \func{F^+_{r-1}}{w} &= \brac{z-w}^{-\brac{r-1} \brac{r+1}} \biggl[ \mu_{r-1}^{\brac{1}} \: \func{\chi^-}{w} + \ldots \biggr] ,
\end{split}
\end{equation}
where the coefficients $\mu_r^{\brac{a}}$, for $a = 0$, $1$, $2$, are given by
\begin{equation} \label{eq:coeff}
\mu_r^{\brac{a}} = \sum_{\sigma \in \group{S}_r} \brac{-1}^{\abs{\sigma}} \prod_{i=1}^r \brac{i + \func{\sigma}{i} + a - 1}! = \prod_{i=1}^r \brac{i-1}! \brac{i+a}!
\end{equation}
This last equality follows from recognising the $\mu_r^{\brac{a}}$ as determinants of Hankel matrices for which LU-decompositions are easily found.  In detail, consider the $r \times r$ matrix $\func{A_r}{a}$, for a non-negative integer $a$, with entries $\brac{\func{A_r}{a}}_{ij} = \brac{i+j+a-1}!$  Defining $r \times r$ matrices $\func{L_r}{a}$ and $\func{U_r}{a}$ by
\begin{equation}
\brac{\func{L_r}{a}}_{ij} = \frac{\brac{i+a}!}{\brac{j+a}!} \binom{i-1}{j-1}, \qquad 
\brac{\func{U_r}{a}}_{ij} = \brac{i-1}! \brac{j+a}! \binom{j-1}{i-1},
\end{equation}
and noting that $\func{L_r}{a}$ is lower-triangular with diagonal entries equal to $1$ and $\func{U_r}{a}$ is upper-triangular, we see that $\func{L_r}{a} \func{U_r}{a}$ is an LU-decomposition of $\func{A_r}{a}$:
\begin{equation}
\begin{split}
\brac{\func{L_r}{a} \func{U_r}{a}}_{ij} &= \sum_{k=1}^r \frac{\brac{i+a}! \brac{i-1}! \brac{j+a}! \brac{j-1}!}{\brac{k+a}! \brac{k-1}! \brac{i-k}! \brac{j-k}!} = \brac{j+a}! \brac{i-1}! \sum_{k=1}^r \binom{i+a}{k+a} \binom{j-1}{k-1} \\
&= \brac{j+a}! \brac{i-1}! \binom{i+j+a-1}{i-1} = \brac{\func{A_r}{a}}_{ij}.
\end{split}
\end{equation}
Since $\det \: \func{L_r}{a} = 1$, we obtain $\det \: \func{A_r}{a} = \det \: \func{U_r}{a} = \prod_{i=1}^r \brac{i-1}! \brac{i+a}!$ and hence \eqnref{eq:coeff}.

We are now in a position to obtain the leading contributions to the
\opes{} of the extension fields $V_{n,\ell}^{\pm}$ and their conjugates $V_{-n,-\ell}^{\mp}$.  Since we assume \eqref{eqdim}, there are only four non-regular expansions and these take the form
\begin{equation} \label{eqnGenExtAlgOPEs}
\begin{split}
\func{V_{n,\ell}^+}{z} \func{V_{-n,-\ell}^+}{w} &= \frac{\mu_{\ell-1}^{\brac{1}} \: \func{\psi^+}{w} / \sqrt{k}}{\brac{z-w}^{2 \Delta_{n,\ell} - 1}} + \ldots , \\
\func{V_{-n,-\ell}^-}{z} \func{V_{n,\ell}^+}{w} &= \mu_{\ell-1}^{\brac{0}} \Biggl[ \frac{1}{\brac{z-w}^{2 \Delta_{n,\ell}}} - \frac{\func{\partial X_{n+1/2,\ell}}{w}}{\brac{z-w}^{2 \Delta_{n,\ell} - 1}} + \frac{\ell \brac{\ell-1}}{2} \frac{\normord{\func{\chi^+}{w} \func{\chi^-}{w}}}{\brac{z-w}^{2 \Delta_{n,\ell} - 2}} \Biggr. \\
& \mspace{90mu} \Biggl. + \frac{1}{2} \frac{\normord{\func{\partial X_{n+1/2,\ell}}{w} \func{\partial X_{n+1/2,\ell}}{w}} - \func{\partial^2 X_{n+1/2,\ell}}{w}}{\brac{z-w}^{2 \Delta_{n,\ell} - 2}} + \ldots \Biggr] , \\
\func{V_{-n,-\ell}^+}{z} \func{V_{n,\ell}^-}{w} &= \mu_{\ell}^{\brac{0}} \Biggl[ \frac{1}{\brac{z-w}^{2 \Delta_{n,\ell}}} - \frac{\func{\partial X_{n-1/2,\ell}}{w}}{\brac{z-w}^{2 \Delta_{n,\ell} - 1}} + \frac{\ell \brac{\ell+1}}{2} \frac{\normord{\func{\chi^+}{w} \func{\chi^-}{w}}}{\brac{z-w}^{2 \Delta_{n,\ell} - 2}} \Biggr. \\
& \mspace{90mu} \Biggl. + \frac{1}{2} \frac{\normord{\func{\partial X_{n-1/2,\ell}}{w} \func{\partial X_{n-1/2,\ell}}{w}} - \func{\partial^2 X_{n-1/2,\ell}}{w}}{\brac{z-w}^{2 \Delta_{n,\ell} - 2}} + \ldots \Biggr] , \\
\func{V_{n,\ell}^-}{z} \func{V_{-n,-\ell}^-}{w} &= \frac{\mu_{\ell-1}^{\brac{1}} \: \func{\psi^-}{w} / \sqrt{k}}{\brac{z-w}^{2 \Delta_{n,\ell} - 1}} + \ldots
\end{split}
\end{equation}
Here, we have used \eqref{eq:coeff} to evaluate the ratios $\mu_{r-1}^{\brac{2}} / \mu_r^{\brac{0}} = \tfrac{1}{2} r \brac{r+1}$ appearing in these expansions.

\subsection{Examples} \label{secExamples}

Let us now illustrate the results of the above calculations with a few simple examples.  First, \eqref{eqdim} tells us that the extended algebra $\alg{W}_{n,\ell}$ will be unique if we insist that the extension fields have conformal dimension $\tfrac{1}{2}$.  Indeed, this requires $\ell = 1$ and $n=0$.  We are therefore extending $\AKMSA{gl}{1}{1}$ by the fields associated with the atypical modules $\AffAtypMod{0,1}$ and $\AffAtypMod{0,-1}$.  Since $2n \ell = 0$ is even, the generators of the resulting extended algebra, $\alg{W}_{0,1}$, may be assigned a definite parity:  $\bc = V_{0,1}^+$ and $\bar{\bc} = V_{0,-1}^-$ are odd, $\bb = V_{0,1}^-$ and $\bg = -V_{0,-1}^+$ are even.  The expansions \eqref{eqnGenExtAlgOPEs} become
\begin{equation}
\begin{aligned}
\func{\bc}{z} \func{\bar{\bc}}{w} &= \frac{1}{z-w} + \func{N}{w} + \frac{1}{2k} \func{E}{w} + \ldots , \\
\func{\bb}{z} \func{\bg}{w} &= \frac{1}{z-w} + \func{N}{w} - \frac{1}{2k} \func{E}{w} + \ldots ,
\end{aligned}
\qquad
\begin{aligned}
\func{\bb}{z} \func{\bc}{w} = +\frac{\func{\psi^+}{w}}{\sqrt{k}} + \ldots , \\
\func{\bg}{z} \func{\bar{\bc}}{w} = -\frac{\func{\psi^-}{w}}{\sqrt{k}} + \ldots ,
\end{aligned}
\end{equation}
which we recognise as a free complex fermion $\brac{\bc,\bar{\bc}}$ and a $\beta \gamma$ ghost system.  Because the mixed \opes{} are regular, $\alg{W}_{0,1}$ decomposes into the direct sum of the chiral algebras of these theories.

If we choose to extend by dimension $1$ fields, then there are two distinct choices:  $n = \tfrac{1}{2}$ and $\ell = 1$ or $n = \tfrac{1}{2}$ and $\ell = -2$.  We expect a current algebra symmetry in both cases.  Indeed, if we set $\BH = N + E / \ell k$ and $\BZ = N - E / \ell k$, then we discover that the $\brac{\BH , \BZ}$-weights of the $\AKMSA{gl}{1}{1}$ currents and the extension fields $V_{n,\ell}^{\pm}$, $V_{-n,-\ell}^{\pm}$ precisely match the $\brac{\BH , \BZ}$-weights of the adjoint representation of $\SLSA{sl}{2}{1}$.\footnote{Here, $\BH$ and $\BZ$ should be associated with the matrices $\diag \set{1,-1,0}$ and $\diag \set{1,1,2}$ in the defining representation of $\SLSA{sl}{2}{1}$.}  Moreover, we have
\begin{equation}
\func{\BH}{z} \func{\BH}{w} = \frac{2 / \ell}{\brac{z-w}^2} + \ldots , \qquad \func{\BZ}{z} \func{\BZ}{w} = \frac{-2 / \ell}{\brac{z-w}^2} + \ldots ,
\end{equation}
and $\func{\BH}{z} \func{\BZ}{w}$ regular, which suggests that the extended algebra will be $\AKMSA{sl}{2}{1}$ at level $1 / \ell$.

Checking this for the choice $\ell = -2$ is easy.  As $2n \ell = -2$ is even, $\alg{W}_{1/2,-2}$ admits a superalgebra structure.  Moreover, the fusion rules
\begin{equation}
\AffAtypMod{0,1} \fuse \AffAtypMod{0,1} = \AffAtypMod{-1/2,2}, \qquad \AffAtypMod{0,-1} \fuse \AffAtypMod{0,-1} = \AffAtypMod{1/2,-2}
\end{equation}
imply that $\alg{W}_{1/2,-2}$ is a subalgebra of the extended algebra $\alg{W}_{0,1}$ considered above.  One readily checks that by taking normally-ordered products, the $\beta \gamma$ ghost fields of $\alg{W}_{0,1}$ generate the bosonic subalgebra $\AKMA{sl}{2}_{-1/2} \subset \AKMSA{sl}{2}{1}_{-1/2}$, the complex fermion gives the $\AKMA{u}{1}$-subalgebra, and the mixed products yield the remaining fermionic currents.  This establishes the superalgebra isomorphism $\alg{W}_{1/2,-2} \cong \AKMSA{sl}{2}{1}_{-1/2}$.

The computation when $\ell = 1$ is, however, more subtle because $2n \ell = 1$ is odd, so $\alg{W}_{1/2,1}$ does not admit the structure of a superalgebra.  To impose the correct parities on the extended algebra currents, we must adjoin an operator-valued function $\mu$ which is required to satisfy
\begin{equation} \label{eqCocycle}
\mu_{a,b} \mu_{c,d} = (-1)^{ad} \mu_{a+b,c+d}, \qquad \text{($a,b,c,d \in \ZZ$).}
\end{equation}
Note that the algebra generated by these operators has unit $\mu_{0,0}$.  The currents are then given by
\begin{equation}
\begin{aligned}
\BE &= +\mu_{1,1} V_{1/2,1}^+, \\
\BF &= -\mu_{-1,-1} V_{-1/2,-1}^-,
\end{aligned}
\qquad
\begin{aligned}
\BH &= N + E/k, \\
\BZ &= N - E/k,
\end{aligned}
\qquad
\begin{aligned}
\Be^+ &= -\mu_{1,0} \psi^+ / \sqrt{k}, \\
\Bf^- &= +\mu_{-1,0} \psi^- / \sqrt{k},
\end{aligned}
\qquad
\begin{aligned}
\Bf^+ &= \mu_{0,-1} V_{-1/2,-1}^+, \\
\Be^- &= \mu_{0,1} V_{1/2,1}^-,
\end{aligned}
\end{equation}
and routine computation now verifies that these currents indeed generate $\AKMSA{sl}{2}{1}_1$.

As our final example, we briefly consider the case of extensions of conformal dimension $\tfrac{3}{2}$.  There are now three distinct choices, corresponding to $n=1$, $\ell=1$, or $n=-\tfrac{1}{4}$, $\ell=2$, or $n=-1$, $\ell=3$.  The latter choice again results in an extended algebra which is a subalgebra of $\alg{W}_{0,1}$ because
\begin{equation}
\AffAtypMod{0,1} \fuse \AffAtypMod{0,1} \fuse \AffAtypMod{0,1} = \AffAtypMod{-1,3}.
\end{equation}
Both $\alg{W}_{1,1}$ and $\alg{W}_{-1,3}$ are superalgebras, while $\alg{W}_{-1/4,2}$ is not.  We expect, however, that a modification similar to \eqref{eqCocycle} will restore the superalgebra parity requirements.  We will not analyse this in any detail as our interest in $\Delta_{n,\ell} = \tfrac{3}{2}$ lies not with the full extended algebra, but rather with one of its subalgebras.

We start with the superalgebras $\alg{W}_{1,1}$ and $\alg{W}_{-1,3}$.  Both $V_{-n,-\ell}^+$ and $V_{n,\ell}^-$ are bosonic and upon defining
\begin{equation}
\begin{gathered}
\sg^+ = \sqrt{\frac{3 \alpha \brac{3 \alpha - 1}}{2 \mu_{\ell}^{\brac{0}}}} \: V_{-n,-\ell}^+, \qquad 
\sg^- = \sqrt{\frac{3 \alpha \brac{3 \alpha - 1}}{2 \mu_{\ell}^{\brac{0}}}} \: V_{n,\ell}^-, \\
\sj = -\alpha \partial X_{n-1/2,\ell}, \qquad 
\st = \frac{\alpha}{2} \normord{\partial X_{n-1/2,\ell} \partial X_{n-1/2,\ell}} - \frac{\ell \brac{\ell + 1}}{2} \frac{\alpha \brac{3 \alpha - 1}}{\alpha + 1} \frac{\normord{\psi^+ \psi^-}}{k},
\end{gathered}
\end{equation}
where
\begin{equation}
\alpha = \frac{1}{\brac{2n-1} \ell},
\end{equation}
we obtain the defining relations of the \emph{Bershadsky-Polyakov algebra} $W_3^{\brac{2}}$ \cite{PolGau90,BerCon91}:
\begin{equation}
\begin{gathered}
\func{\sg^+}{z} \func{\sg^-}{w} = \frac{\brac{K+1} \brac{2K+3}}{\brac{z-w}^3} + \frac{3 \brac{K+1} \func{\sj}{w}}{\brac{z-w}^2} + \frac{3 \func{\normord{\sj \sj}}{w} + \tfrac{3}{2} \brac{K+1} \func{\partial \sj}{w} - \brac{K+3} \func{\st}{w}}{z-w} + \ldots , \\
\func{\sj}{z} \func{\sg^{\pm}}{w} = \frac{\pm \func{\sg^{\pm}}{w}}{z-w} + \ldots , \qquad 
\func{\sj}{z} \func{\sj}{w} = \frac{\brac{2K+3}/3}{\brac{z-w}^2} + \ldots , \\
\func{\st}{z} \func{\sg^{\pm}}{w} = \frac{3}{2} \frac{\func{\sg^{\pm}}{w}}{\brac{z-w}^2} + \frac{\func{\partial \sg^{\pm}}{w}}{z-w} + \ldots , \qquad 
\func{\st}{z} \func{\sj}{w} = \frac{\func{\sj}{w}}{\brac{z-w}^2} + \frac{\func{\partial \sj}{w}}{z-w} + \ldots , \\
\func{\st}{z} \func{\st}{w} = \frac{-\brac{2K+3} \brac{3K+1} / 2 \brac{K+3}}{\brac{z-w}^4} + \frac{2 \func{\st}{w}}{\brac{z-w}^2} + \frac{\func{\partial \st}{w}}{z-w} + \ldots
\end{gathered}
\end{equation}
Here, the $\AKMA{sl}{3}$-level $K = \tfrac{3}{2} \brac{\alpha - 1}$ is $0$ for $\alg{W}_{1,1}$ and $-\tfrac{5}{3}$ for $\alg{W}_{-1,3}$.  The central charge of the $W_3^{\brac{2}}$-subalgebra is in both cases $-1$.

For $\alg{W}_{-1/4,2}$, this procedure does not yield a Bershadsky-Polyakov algebra because $V_{-n,-\ell}^+$ and $V_{n,\ell}^-$ are, in this case, mutually fermionic.  Rather, these fields generate a copy of the $\mathcal{N} = 2$ superconformal algebra of central charge $-1$.  Instead, we must consider the mutually bosonic fields $V_{n,\ell}^+$ and $V_{-n,-\ell}^-$.  Taking
\begin{equation}
\sg^{+} = \sqrt{3} \: V_{1/4,-2}^-, \quad \sg^{-} = \sqrt{3} \: V_{-1/4,2}^+, \quad  \sj = -\partial X_{1/4,2}, \quad \st = \frac{1}{2} \normord{\partial X_{1/4,2} \partial X_{1/4,2}} - \frac{1}{k} \normord{\psi^+ \psi^-}
\end{equation}
in particular, now leads to the Bershadsky-Polyakov algebra of level $0$ and central charge $-1$.  (In contrast, $V_{n,\ell}^+$ and $V_{-n,-\ell}^-$ are fermionic in both $\alg{W}_{1,1}$ and $\alg{W}_{-1,3}$, generating copies of the $\mathcal{N} = 2$ superconformal algebra with central charges $1$ and $-1$, respectively.)

\subsection{$W^{\brac{2}}_N$-subalgebras}

In the previous section, we found the Bershadsky-Polyakov algebra $W^{\brac{2}}_3$, at certain levels, appearing as a subalgebra of the extended algebras $\alg{W}_{1,1}$, $\alg{W}_{-1/4,2}$ and $\alg{W}_{-1,3}$.  We now generalise this observation.  The algebra $W^{\brac{2}}_3$ is defined \cite{PolGau90,BerCon91} as the Drinfel'd-Sokolov reduction of $\AKMA{sl}{3}$ corresponding to the non-principal embedding of $\SLA{sl}{2}$ in $\SLA{sl}{3}$.  Feigin and Semikhatov \cite{Feigin:2004wb} found that it could also be realised as a subalgebra of $\AKMSA{sl}{3}{1} \oplus \AKMA{u}{1}$ commuting with an $\AKMA{sl}{3}$-subalgebra.  They then studied a generalisation $W^{\brac{2}}_N \subset \AKMSA{sl}{N}{1} \oplus \AKMA{u}{1}$ which commutes with the obvious $\AKMA{sl}{N}$-subalgebra.

When $N=1$, these generalisations reduce to the chiral algebra of the $\beta \gamma$ ghost system.  For $N=2$, one gets $\AKMA{sl}{2}$, and as mentioned above, $N=3$ recovers the Bershadsky-Polyakov algebra.  The examples studied in \secref{secExamples} therefore lead us to the plausible conjecture that the $W^{\brac{2}}_N$ algebras of Feigin and Semikhatov may be realised, at least for certain levels, as subalgebras of certain of our extended algebras $\alg{W}_{n,\ell}$.  We mention that there is a second construction of these $W^{\brac{2}}_N$ algebras, but restricted to the critical level $K=-N$ (see \eqref{eq:W_NC}), starting from the affine superalgebra $\AKMSA{psl}{N}{N}$ at (critical) level $0$ \cite{CGL}.

Feigin and Semikhatov only computed the first few terms of the defining \opes{} of $W^{\brac{2}}_N$.  We will compare these terms with those obtained from our extended algebras, finding decidedly non-trivial agreement.  Our findings will, however, be stated as conjectures because the full \ope{} of $W^{\brac{2}}_N$ is not currently known.  $W^{\brac{2}}_N$ is generated by two fields $\CE^{\pm}_N$ of dimension $\tfrac{1}{2} N$, a $\AKMA{u}{1}$-current $\CH_N$ and an energy-momentum tensor $\CT_N$.  The defining expansions are:
\begin{equation}\label{eq:W2nope}
\begin{split}
\func{\CH_N}{z} \func{\CH_N}{w} &= \frac{\brac{N-1} K/N + N-2}{\brac{z-w}^2} + \ldots , \qquad
\func{\CH_N}{z} \func{\CE^{\pm}_N}{w} = \pm \frac{\func{\CE^{\pm}_N}{w}}{z-w} + \ldots , \\
\func{\CE^+_N}{z} \func{\CE^-_N}{w} &= \frac{\lambda_{N-1}}{\brac{z-w}^N} +  \frac{N \lambda_{N-2} \func{\CH_N}{w}}{\brac{z-w}^{N-1}} - \frac{\brac{K+N} \lambda_{N-3} \func{\CT_N}{w}}{\brac{z-w}^{N-2}} \\
&\mspace{-20mu} + \frac{\lambda_{N-3}}{\brac{z-w}^{N-2}} \sqbrac{\frac{N \brac{N-1}}{2} \func{\normord{\CH_N \CH_N}}{w} + \frac{N \bigl( \brac{N-2} \brac{K+N-1} - 1 \bigr)}{2} \func{\partial \CH_N}{w}} + \ldots
\end{split}
\end{equation}
Here, $\lambda_m = \prod_{i=1}^m \bigl( i \brac{K+N-1} - 1 \bigr)$, $K$ is the level of the $W^{\brac{2}}_N$ algebra, and the central charge is given by
\begin{equation} \label{eq:W_NC}
C = -\frac{\bigl( \brac{K+N} \brac{N-1} - N \bigr) \bigl( \brac{K+N} \brac{N-2} N - N^2 + 1 \bigr)}{K+N}.
\end{equation}

Suppose first that $2n\ell$ is even, so we can consider the bosonic subalgebra generated by the fields
\begin{equation}
\CE^+_N = \sqrt{\frac{\lambda_{N-1}}{\mu_{\ell}^{\brac{0}}}} \: V_{-n,-\ell}^+, \qquad \CE^-_N = \sqrt{\frac{\lambda_{N-1}}{\mu_{\ell}^{\brac{0}}}} \: V_{n,\ell}^-.
\end{equation}
Evaluating the \ope{} of these fields using \eqref{eqnGenExtAlgOPEs} and comparing with \eqref{eq:W2nope}, we find that the first two singular terms agree provided that $N = 2 \Delta_{n,\ell}$ and $\CH_N = -\partial X_{n-1/2,\ell} / \brac{2n-1} \ell$.  This also fixes the $W^{\brac{2}}_N$ level $K$.  Comparing the third terms fixes the form of the $W^{\brac{2}}_N$ energy-momentum tensor $\CT_N$ and $\CH_N$ is then verified to have dimension $1$.  However, the $\CE^{\pm}_N$ only have the required dimension $\tfrac{1}{2} N = \Delta_{n,\ell}$ if $n=1$ or $2n+\ell=1$.\footnote{There is a third solution, $\Delta_{n,\ell} + \ell + 1 = 0$, but this is invalid as we require $\ell, \Delta_{n,\ell} > 0$.}  These constraints also let us check that $\CT_N$ is an energy-momentum tensor and the central charge turns out to be $C=-1$.  When $2n\ell$ is odd, we instead consider the bosonic subalgebra generated by
\begin{equation}
\CE^+_N = \sqrt{\frac{\lambda_{N-1}}{\mu_{\ell-1}^{\brac{0}}}} V_{-n,-\ell}^-, \qquad \CE^-_N = \sqrt{\frac{\lambda_{N-1}}{\mu_{\ell-1}^{\brac{0}}}} \: V_{n,\ell}^+.
\end{equation}
A similar analysis reveals that this subalgebra agrees with $W^{\brac{2}}_N$ up to the first three terms in the \opes{} provided that $N = 2 \Delta_{n,\ell}$ and either $\ell = 1$ or $\ell = 2$.\footnote{Taking $n = -\tfrac{1}{2} \brac{\ell + 1}$ also satisfies these requirements, but then $2n\ell$ is necessarily even.  Moreover, there is again a solution of the form $\Delta_{n,\ell} - \ell + 1 = 0$, but it is easy to check that it leads to the wrong \ope{} of $\CT_N$ with itself.}  In the first case, $C=1$; in the second, $C=-1$.

We summarise our findings as follows:
\begin{conjecture}
The extended algebra $\alg{W}_{n,\ell}$ has a subalgebra isomorphic to $W^{\brac{2}}_N$ of level $K$ when:
\begin{itemize}
\item $\ell = 1$ and $n = 0,1,2,\ldots$  Then, $N = 2n + 1$ and $K = -2 \brac{n-1} \brac{2n+1} / \brac{2n-1}$.
\item $\ell = 1$ and $n = \tfrac{1}{2}, \tfrac{3}{2}, \tfrac{5}{2}, \ldots$  Then, $N = 2n + 1$ and $K = - \brac{2n^2 - 1} / n$.
\item $\ell = 2$ and $n = -\tfrac{3}{4}, -\tfrac{1}{4}, \tfrac{1}{4}, \ldots$  Then, $N = 4 \brac{n+1}$ and $K = -2 \brac{n+1} \brac{4n+1} / \brac{2n+1}$.
\item $n = -\tfrac{1}{2} \brac{\ell - 1}$ and $\ell = 1,2,3,\ldots$  Then, $N = \ell$ and $K = -\brac{\ell^2 - \ell - 1} / \ell$.
\end{itemize}
\end{conjecture}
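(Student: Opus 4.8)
Since the defining operator product algebra of $W^{\brac{2}}_N$ is presently known only through the three singular terms recorded in \eqref{eq:W2nope}, the claim can be substantiated only to that order; this is why it is phrased as a conjecture rather than a theorem. The plan is to exhibit, inside each $\alg{W}_{n,\ell}$ belonging to one of the four families, an explicit quadruple of fields playing the roles of $\CE^{\pm}_N$, $\CH_N$ and $\CT_N$, and to check that their mutual \opes{} reproduce \eqref{eq:W2nope} together with the central charge \eqref{eq:W_NC}. The small-$N$ instances already worked out in \secref{secExamples} (all with $N \leqslant 3$) will serve as base cases and as consistency checks on the general formulae.

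I would start from the free field realisation of \secref{appFreeFields} and the four non-regular expansions \eqref{eqnGenExtAlgOPEs}; the Hankel-determinant evaluation \eqref{eq:coeff}, and the ensuing ratio $\mu^{\brac{2}}_{r-1}/\mu^{\brac{0}}_r = \tfrac12 r\brac{r+1}$, are what make those expansions explicit. The construction then splits on the parity of $2n\ell$, exactly as in \secref{secExamples}. Taking $\ell > 0$ without loss of generality, when $2n\ell$ is even $\alg{W}_{n,\ell}$ is a genuine superalgebra and the mutually bosonic, suitably rescaled pair $V^+_{-n,-\ell}$, $V^-_{n,\ell}$ is taken as $\CE^{\pm}_N$; when $2n\ell$ is odd one instead uses $V^-_{-n,-\ell}$, $V^+_{n,\ell}$, adjoining if necessary an operator-valued cocycle $\mu_{a,b}$ satisfying \eqref{eqCocycle} to make the parity assignments consistent.

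In either case I would match term by term in $\func{\CE^+_N}{z}\func{\CE^-_N}{w}$: rescaling by $\sqrt{\lambda_{N-1}/\mu^{\brac{0}}}$ (with $\mu^{\brac{0}}_{\ell}$ or $\mu^{\brac{0}}_{\ell-1}$ as appropriate) normalises the leading pole and forces $N = 2\Delta_{n,\ell}$; the $\brac{z-w}^{-(N-1)}$ term identifies $\CH_N$ with a multiple of $\partial X_{n-1/2,\ell}$ in the even case and of $\partial X_{n+1/2,\ell}$ in the odd case, and matching $\func{\CH_N}{z}\func{\CH_N}{w}$ against \eqref{eq:W2nope} fixes the level $K$; the $\brac{z-w}^{-(N-2)}$ term then pins $\CT_N$ down as an explicit combination of $\normord{\partial X\,\partial X}$ and $\normord{\psi^+\psi^-}$. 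One then imposes the remaining consistency requirements---that $\CH_N$ be a weight-one primary and $\CE^{\pm}_N$ weight-$\tfrac12 N$ primaries for $\CT_N$, that $\CT_N$ close into a Virasoro algebra, and that its central charge equal \eqref{eq:W_NC}. These are algebraic conditions on $\brac{n,\ell}$, and after discarding the spurious solutions forbidden by $\ell, \Delta_{n,\ell} > 0$ (for instance the $\Delta_{n,\ell}+\ell+1 = 0$ branch, or the $\Delta_{n,\ell}-\ell+1 = 0$ branch, which yields the wrong $\CT_N\CT_N$ expansion) they should reduce precisely to the four listed families, with the stated $N$ and $K$ and the value $C = \pm 1$ following by substitution.

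The hard part is not any of these calculations but the incompleteness of the target. Lacking the full defining \opes{} of $W^{\brac{2}}_N$, one cannot confirm that $\brac{\CE^{\pm}_N, \CH_N, \CT_N}$ generates a subalgebra of $\alg{W}_{n,\ell}$ closing on exactly these fields, nor that the resulting map is an isomorphism of vertex algebras rather than merely a third-order coincidence of structure constants. Associativity is not itself at issue, since $\alg{W}_{n,\ell}$ is already a vertex algebra; the only genuine gap is this comparison with an incompletely specified algebra, and closing it would require either a generators-and-relations presentation of $W^{\brac{2}}_N$ or an independent identification, for example through the Drinfel'd--Sokolov reduction of $\AKMA{sl}{N}$ or the Feigin--Semikhatov coset inside $\AKMSA{sl}{N}{1} \oplus \AKMA{u}{1}$.
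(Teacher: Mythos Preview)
Your proposal is correct and follows essentially the same route as the paper: split on the parity of $2n\ell$, choose the mutually bosonic pair of extension fields accordingly, rescale and match the expansions \eqref{eqnGenExtAlgOPEs} against \eqref{eq:W2nope} term by term to fix $N$, $\CH_N$, $K$ and $\CT_N$, and then solve the residual consistency conditions on $\brac{n,\ell}$, discarding the spurious branches you name. Two small points: the cocycle $\mu_{a,b}$ is not needed here, since in the odd-$2n\ell$ case the pair $V^+_{n,\ell}$, $V^-_{-n,-\ell}$ is already mutually bosonic; and the paper pushes the comparison one order further than you suggest --- Feigin and Semikhatov in fact recorded the fourth singular term, containing a dimension-$3$ primary $\CW_N$, and the paper extends \eqref{eqnGenExtAlgOPEs} to verify that term as well, thereby confirming the conjecture for all $N \leqslant 4$ rather than only $N \leqslant 3$.
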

\noindent Note that the examples considered in \secref{secExamples} exhaust the $W^{\brac{2}}_N$-subalgebras with $N \leqslant 3$ except for $\ell = 2$ and $n = -\tfrac{3}{4}$.  This latter case is excluded if one insists, as we did with \eqref{eqdim}, that the \ope{} of $\CE^{\pm}$ with itself is regular.  We mention that Feigin and Semikhatov actually computed the first \emph{four} terms of the $W^{\brac{2}}_N$ \opes{}, finding in the fourth term a Virasoro primary field $\CW_N$ of dimension $3$ and $\CH_N$-weight $0$.  We have extended \eqnTref{eqnVertexOPE}{eqnCompositeSFOPEs}{eqnGenExtAlgOPEs} to compute $\CW_N$ in our extended algebras and have checked that for each $\ell$ and $n$ appearing in our conjecture, this field indeed has the required properties.  It follows that our conjecture has been verified for all $N \leqslant 4$.


\end{document}